\documentclass[sigconf]{acmart}

\settopmatter{printacmref=true,printfolios=false}

\citestyle{acmauthoryear}
\setcitestyle{square}

\usepackage{booktabs} 
\usepackage{blindtext}
\usepackage{multirow}
\usepackage{subfig}
\usepackage{stfloats}
\usepackage{balance}
\usepackage[linesnumbered,ruled]{algorithm2e}

\definecolor{blue}{rgb}{0,0,0}

\renewcommand{\shortauthors}{B. Trovato et al.}

\newcommand{\calm}{\mathcal{M}}
\newcommand{\calf}{\mathcal{F}}
\newcommand{\cals}{\mathcal{S}}

\newcommand{\caln}{\mathcal{N}}

\newcommand{\lnorm}{\left|\left|}
\newcommand{\rnorm}{\right|\right|}

\newcommand{\E}{\mathbb{E}}
\newcommand{\Gmi}{\mathcal{G}_{-i}}
\newcommand{\Gmiv}{ G_{-i}}



\copyrightyear{2019}
\acmYear{2019}
\setcopyright{acmlicensed}
\acmConference[ETRA '19]{2019 Symposium on Eye Tracking Research and
Applications}{June 25--28, 2019}{Denver , CO, USA}
\acmBooktitle{2019 Symposium on Eye Tracking Research and Applications
(ETRA '19), June 25--28, 2019, Denver , CO, USA}
\acmPrice{15.00}
\acmDOI{10.1145/3314111.3319823}
\acmISBN{978-1-4503-6709-7/19/06}

\acmSubmissionID{57}
\title{Differential Privacy for Eye-Tracking Data}
\begin{document}


\author{Ao Liu}
\orcid{1234-5678-9012}
\affiliation{%
  \institution{Rensselaer Polytechnic Institute}
  \streetaddress{110 8th Street}
  \city{Troy}
  \state{New York}
  \postcode{12180}
}
\email{liua6@rpi.edu}

\author{Lirong Xia}
\orcid{1234-5678-9012}
\affiliation{%
  \institution{Rensselaer Polytechnic Institute}
  \streetaddress{110 8th Street}
  \city{Troy}
  \state{New York}
  \postcode{12180}
}
\email{xial@cs.rpi.edu}

\author{Andrew Duchowski}
\orcid{1234-5678-9012}
\affiliation{%
  \institution{Clemson University}
  \streetaddress{Clemson University}
  \city{Clemson}
  \state{South Carolina}
  \postcode{29634}
}
\email{duchowski@clemson.edu}

\author{Reynold Bailey}
\orcid{1234-5678-9012}
\affiliation{%
  \institution{Rochester Institute of Technology}
  \streetaddress{1 Lomb Memorial Dr}
  \city{Rochester}
  \state{New York}
  \postcode{14623}
}
\email{rjb@cs.rit.edu}

\author{Kenneth Holmqvist}
\orcid{1234-5678-9012}
\affiliation{%
  \institution{University of Regensburg}
  \streetaddress{Universitätsstraße 31}
  \city{Regensburg}
  \state{Germany}
  \postcode{93053}
}
\email{Kenneth.Holmqvist@psychologie.uni-regensburg.de}

\author{Eakta Jain}
\orcid{1234-5678-9012}
\affiliation{%
  \institution{University of Florida}
  \streetaddress{University of Florida}
  \city{Gainesville}
  \state{Florida}
  \postcode{32611}
}
\email{ejain@cise.ufl.edu}

\renewcommand{\shortauthors}{Liu et al.}

\begin{abstract}
As large eye-tracking datasets are created, data privacy is a pressing concern for the eye-tracking community. De-identifying data does not guarantee privacy because multiple datasets can be linked for inferences. \textcolor{blue}{A common belief is that aggregating individuals' data into composite representations such as heatmaps protects the individual.} \textcolor{blue}{However, we analytically examine the privacy of (noise-free) heatmaps and show that they do not guarantee privacy.} We further propose two noise mechanisms that guarantee privacy and analyze their privacy-utility tradeoff. Analysis reveals that our Gaussian noise mechanism is an elegant solution to preserve privacy for heatmaps. Our results have implications for interdisciplinary research to create differentially private mechanisms for eye tracking.
\end{abstract}

%
%
\begin{CCSXML}
<ccs2012>
<concept>
<concept_id>10002978.10003029</concept_id>
<concept_desc>Security and privacy~Human and societal aspects of security and privacy</concept_desc>
<concept_significance>500</concept_significance>
</concept>
<concept>
<concept_id>10002978.10003029.10011150</concept_id>
<concept_desc>Security and privacy~Privacy protections</concept_desc>
<concept_significance>500</concept_significance>
</concept>
</ccs2012>
\end{CCSXML}

\ccsdesc[500]{Security and privacy~Human and societal aspects of security and privacy}
\ccsdesc[500]{Security and privacy~Privacy protections}

\keywords{Eye-tracking, Differential Privacy, Privacy-Utility Tradeoff, Heatmaps}


\maketitle
\section{Introduction}

With advances in mobile and ubiquitous eye tracking, there is ample opportunity to collect eye tracking data at scale. A user's gaze encodes valuable information including attention, intent, emotional state, cognitive ability, and health. This information can be used to gain insight into human behavior (e.g. in marketing and user experience design), create computational models (e.g. for smart environments and vehicles), and enable interventions (e.g. health and education). When combined with physiological sensing and contextual data, this information facilitates the modeling and prediction of human behavior and decision making. As users become increasingly conscious about what their data reveals about them, there is mounting pressure on policymakers and corporations to introduce robust privacy regulations and processes \cite{gdpr:2018,usatoday:2018}. The eye tracking community must actively pursue research about privacy for broad public acceptance of this technology.


Data privacy for eye tracking has been raised as a concern in the community \cite{ling2014synergies,KAB18}. At a recent Dagstuhl seminar on ubiquitous gaze sensing and interaction\nolinebreak\footnote{%
    \url{https://www.dagstuhl.de/18252}
}, privacy considerations were highlighted in a number of papers in the proceedings \cite{CDQW18}. Privacy as a general term has a wide range of meanings and different levels of importance for different users. Privacy can obviously be preserved by distorting or randomizing the answers to queries, however doing so renders the information in the dataset useless. 


To maintain privacy while preserving the utility of the information, we propose to apply the concept of \textit{differential privacy} (DP) which has been developed by theoretical computer scientists and applied to database applications over the past decade \cite{Dwork:2011:privatedataanalysis}. Differential privacy can be summarized as follows:
\begin{quote}
 Privacy is maintained if an individual's records
 cannot be accurately identified, even in the worst case when all other data has been exposed by adversaries.
\end{quote}


\noindent Our technical contributions are:
(1) We introduce the notion of {differential privacy} for eye tracking data. (2) We formally examine the privacy of aggregating eye tracking data as heatmaps and show that aggregating into heatmaps does not guarantee privacy from a DP perspective. (3) We propose two mechanisms to improve the privacy of aggregated gaze data. (4) We analyze the privacy-utility trade-off of these mechanisms from a DP-point of view.

\begin{figure*}[tbp]
    \centering
    \includegraphics[width=0.78\textwidth]{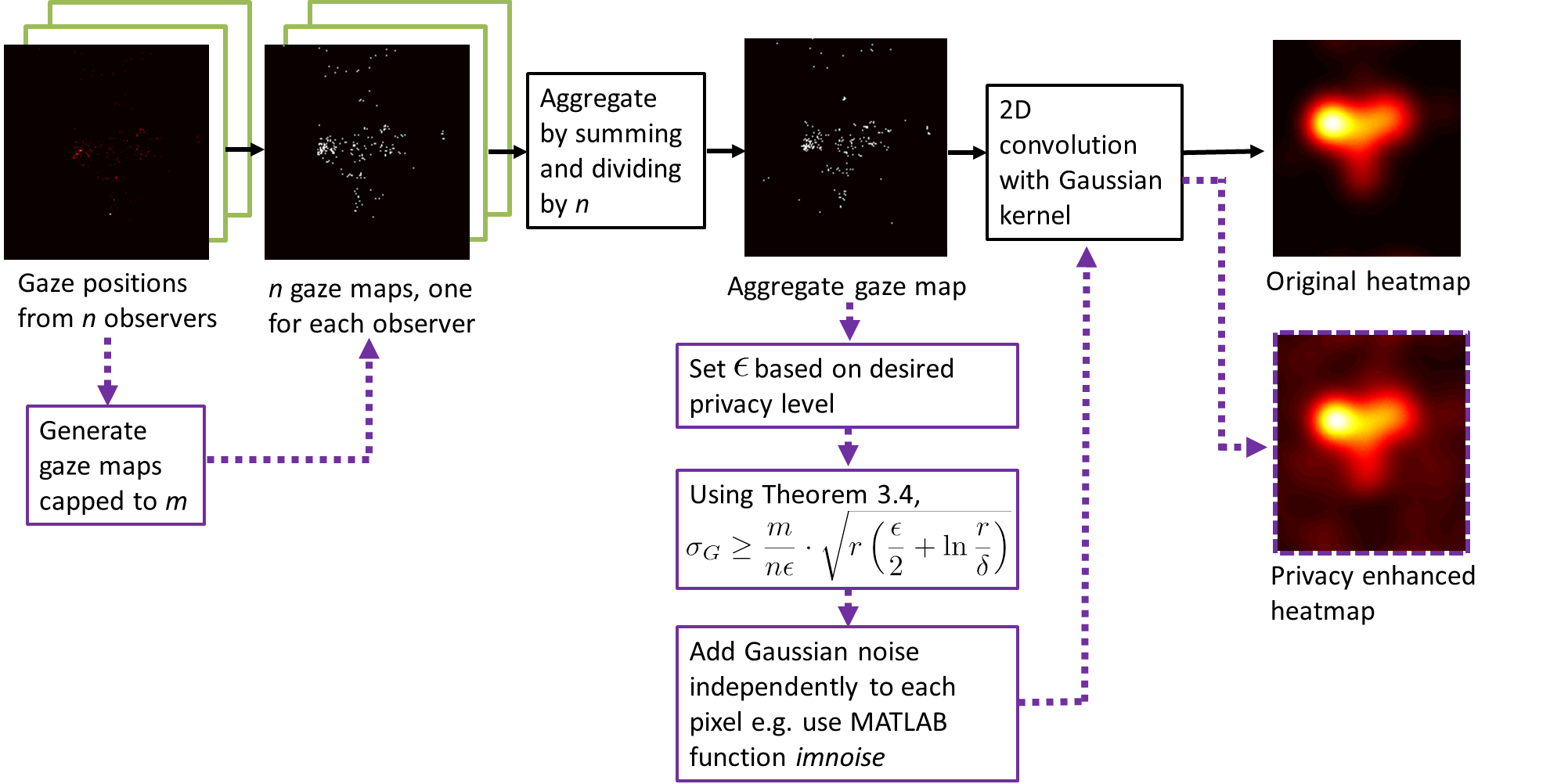}
    \caption{
    Workflow for researchers and practioners to create the desired strength of privacy level. The solid lines illustrate the standard workflow for generating an aggregate static heatmap from eye tracking data. The dotted lines show how to implement a privacy protocol with small modifications to this workflow. The hotspots on the privacy enhanced heatmap are visually in the same locations as the original heatmap. The supplementary materials show several examples of privacy enhanced heatmaps for the same noise level.}
    \label{fig:workflow}
\end{figure*}
\begin{table*}[b!p]
    \centering
    \caption{In most cases, eye tracking data is released with the stimuli. This table illustrates the threats posed by releasing this data if no privacy protocol is in place.}
    \label{tab:threats}
    \begin{tabular}{ p{3cm}|p{3.5cm}|p{3cm}|p{3cm}|p{3cm}}
    \toprule
   \textbf{Type of data} & \textbf{Example of intended use} & \textbf{What adversary can access in worst case} & \textbf{What adversary can do now} & \textbf{Does DP apply?}\\
   \midrule
    Raw eye movements & Foveated rendering & Raw eye movements  & Neurological diagnoses (see Scenario 1) & yes, future work \\
    \midrule
    Aggregated data without temporal information (static heatmaps)
    & Marketing, UX design, education & Individual's heatmap & Behavioral diagnoses (see Scenario 2) & yes, this paper \\
    \midrule
    Aggregated data with temporal information (dynamic heatmaps) & Training models for autonomous vehicles & Individual's heatmap & Establish driver's liability (see Scenario 3) & yes, future work \\
    \midrule
    Areas of Interest (AOI) analysis & Expert vs novice analysis & Individual's AOI visit order & Autism spectrum diagnoses & yes, future work\\
    \bottomrule
    \end{tabular}
\end{table*}

From a practical perspective, the notion of differential privacy is both achievable and theoretically verifiable. Though the proofs may be mathematically sophisticated, the implementation is straight\-forward, and can be integrated into the eye tracking data collection pipeline. Figure~\ref{fig:workflow} illustrates how this may be achieved. Privacy is guaranteed for the worst case when an adversary has already gained access to the data of all other individuals in a dataset (by hacking them for example). Even in this case, the adversary will still not be able to accurately infer data records of the individual. In applying the general definition of differential privacy to eye tracking, we acknowledge that individual users, service providers, and policy makers may have different positions on what level of privacy versus utility is desirable. Our work provides a theoretically grounded analysis of privacy preserving mechanisms to empower these stakeholders to make such decisions. 




\textbf{Implications.} Table~\ref{tab:threats} presents some of the threats that may be posed if an adversary was to access eye tracking data with no privacy protocol in place. Specifically, we elaborate three scenarios where eye tracking data is collected with good intentions, but if hacked, could have consequences for the individuals concerned.

Scenario 1: \textit{A hospital or doctor's office collects eye tracking data as part of patients' general examination. A research grant enables a team to use this data to build a machine learning model that can  predict whether someone has a certain neurological disorder. A hacker gains unauthorized access to this database and is able to identify specific individuals with the disorder. The hacker then sells or publicly releases the identity of these individuals, negatively impacting their employment opportunities, inflating their health insurance costs, and elevating their social and emotional anxiety.} 


Scenario 2: \textit{A parent signs a consent form allowing her child to be eye tracked in a classroom. The consent form says that this data is for a research project to understand and characterize learning disabilities and build interventions. The anonymized dataset will be released as part of an NIH big data initiative. If an adversary manages to access an individual child's data and analyze it for markers of dyslexia (for example), they may sell the information to a marketing company that will contact the parent with unsolicited advertising for therapies.}

Scenario 3: \textit{A publicly funded research team is using eye tracking to study awareness and fatigue of commercial truck drivers. The eye movement data along with the scene being viewed is streamed to a remote server for later analysis. A driver in the study was involved in an accident that resulted in a fatality. Although drivers were told their data would be de-identified, a private investigator, hired by the family of the deceased, was able to extract his/her data record from the database, revealing evidence that (s)he was at fault in the accident.}

In scenarios such as these, research teams may reassure participants that raw data will not be released, or that individual data will be de-identified or aggregated (often in the form of heatmaps), providing the impression that privacy is preserved.





\section{Background}\label{background}

\subsubsection{The problem with de-identification.}
The first ``solution'' that occurs to many of us is to simply anonymize, or 
de-identify the dataset. This operation refers to removing personal identifiers 
such as the name of the participant from the dataset.
The problem with this approach is that it is not future-proof; as newer datasets 
are released, multiple datasets can be linked, and the identity of a participant 
can then be inferred \cite{nissim2017differential,ohm2009broken,holland2011biometric,komogortsev2010biometric}.

\subsubsection{The problem with running queries.} \hspace{0.05cm} A second ``solution'' would be to not release the dataset as is, rather allow the analyst to query the dataset. The dataset would not allow queries on individual items, but only on large numbers of items. In other words, a query such as \textit{``Where did the student with the lowest grade look?''} would be disallowed. But then, the analyst can run queries such as \textit{``Where did the students who did not have the lowest grade look?''}, and \textit{``Where did all the students look?''}, and use these queries to infer the disallowed query. This ``solution'' is not able to guarantee privacy in the worst case, for example, if the adversary hacks the data of $n-1$ out of $n$ persons in the dataset. Then (s)he can easily infer the $n$th person's data by querying the average or sum of the dataset.

These issues are well known in database research. One widely accepted formal definition of privacy that has emerged from this extensive research is as follows: an individual's privacy is preserved if the inferences that are made from the dataset do not indicate in any significant way whether this individual is part of the dataset or not. This notion is called \textit{differential privacy}.

\subsubsection{Differential privacy.}
Differential privacy as a concept was conceived through insights by theoretical computer scientists aiming to formalize the notion of privacy that was practically achievable as well as theoretically verifiable \cite{Dwork:2011:privatedataanalysis}. A survey of differential privacy in different fields is presented by Dwork~\shortcite{dwork2008differential}. Relevant to eye tracking are the works that have applied differential privacy definitions to machine learning \cite{ji2014differential,abadi2016deep} and time-series analysis \cite{fan2014adaptive,rastogi2010differentially}. From a societal impact perspective, the eye tracking industry has as much to gain from these ideas.

\subsubsection{Mathematical definition of differential privacy.}
Formally, given datasets $D$ and $D'$ that differ in at most one entry,
let $\calm$ denote a randomized mechanism that outputs a query of a database with some probability. Then, let $\cals$ denote a subset of query outcomes (called an ``event''). Then, we say the mechanism $\calm$ is $\epsilon-$differentially private (or $\epsilon-$DP in short) if for any $\cals$, $D$ and $D'$,
\begin{equation}\label{equ:dp1}
    Pr\left[\calm(D) \in \cals\right] \leq e^\epsilon Pr\left[\calm(D') \in \cals\right],
\end{equation}
In the above inequality, the probability comes from the randomness of mechanism $\calm$. Such randomness is necessary as we will see in Section~\ref{subs:nofree}. We note that this is a worst-case analysis that offers a strong guarantee of privacy, because the inequality must hold for all $\cals$, and all neighboring datasets $D$ and $D'$. 

Another more applicable notion of differential privacy is $(\epsilon,\delta)-$\\differential privacy, which is a generalization of $\epsilon-$DP. Using the notation above, we say the mechanism $\calm$ is $(\epsilon,\delta)-$differentially private (or $(\epsilon,\delta)-$DP in short) if for any $\cals$, $D$ and $D'$ ($D$ and $D'$ differs at most one entry),
{
\begin{equation}\nonumber
    Pr\left[\calm(D) \in \cals\right] \leq e^\epsilon Pr\left[\calm(D') \in \cals\right] + \delta,
\end{equation}
}
Typically it is believed that $\delta = \Omega\left(\frac{1}{n}\right)$ means poor privacy \cite{dwork2014algorithmic} because it allows some individuals' data to be fully recovered, 
where $n$ is the input size. We note that a mechanism can be $(\epsilon, \delta)$-DP for multiple combinations of $(\epsilon, \delta)$. As a rule of thumb, smaller $\epsilon$'s and $\delta$'s means better privacy, though we must point out that directly comparing different numerical values is not informative, e.g.~$(0.1,0.1)$ and $(1,0)$ are not comparable.



\subsubsection{Toy example.}
As part of a general wellness dataset $D$, the heights of five people are collected. The mean value as the average height of the population is released. Here, $\cals$ is the set of outputting average height. In this example, an adversary obtains the heights of four of these five persons through hacking. In this way, the adversary has a dataset $D'$ that contains all persons except the fifth. The adversary computes the average height of the dataset $D'$ and finds that it is much lower than the average height of the dataset $D$. The adversary thus infers that the fifth person must be very tall.\footnote{%
    The adversary
    can also compute the exact height of the fifth person.
}
In other words, even though the fifth person was not known by the adversary, and the dataset $D$ was not released (only the average height was released), the fifth person is also compromised because his or her height can be reverse engineered by the adversary. Now, we introduce a mechanism $\calm$ that perturbs the average height of the dataset $D$ by a random amount before releasing it. If the level of perturbation is high enough, the adversary will not be able to even infer whether the fifth person is tall or not. Thus the mechanism $\calm$ protects the privacy of the fifth person. Of course, if we add too much perturbation (or, output totally at random), the utility of the dataset will be affected because the output average height contains little information and does not reflect the average height of the population. This is the privacy-utility tradeoff (see Section~\ref{sec:put}). 


\subsubsection{Privacy in eye tracking.}
For much of the past two decades, the focus of eye tracking research has been on making eye tracking ubiquitous, and on discovering the breadth of inferences that can be made from this data, especially in the contexts of health \cite{leigh2015neurology} and education \cite{jarodzka2017eye}. Privacy has not been a high priority because of the benefits of identifying pathology and designing personalized interventions. The relevance of privacy to eye tracking data was eloquently discussed by \citeN{liebling2014privacy}.
\citeN{ling2014synergies} and \citeN{KAB18} have also highlighted the need for eye-tracking data.
Privacy considerations have been raised both for streaming data, as well as pre-recorded datasets. Despite growing awareness and concern, few solutions have been proposed. Our work provides a technical solution for the privacy of individuals.

\subsubsection{Why heatmaps as the first for privacy analysis.}
%
Besides {\em scanpaths}, the {\em heatmap} is a popular method of visualizing
eye movement data \cite{Duc18}.
Heatmaps,
or attentional landscapes
as introduced by \citeN{pomplun1996disambiguating}
and popularized by \citeN{wooding2002fixation},
are used to represent aggregate fixations \cite{duchowski2012aggregate}.
Other similar approaches involve gaze represented as height maps
\cite{ESW84,vvA07} or Gaussian Mixture Models \cite{MSHH11}.
Heatmaps are generated by accumulating exponentially decaying intensity
$I(i,j)$ at pixel coordinates $(i,j)$ relative to a fixation at
coordinates $(x,y)$,
\begin{equation*}
        \label{eqn:gauss}
        I(i,j)  =  \exp \left(
                        -((x-i)^2 + (y-j)^2) / (2 \sigma^2)
                        \right)
\end{equation*}
where the exponential decay is modeled by the Gaussian point spread function.
A GPU-based implementation \cite{duchowski2012aggregate} is available for real-time
visualization. Though heatmaps are very popular as a visualization, AOI analyses and temporal
data analysis is key to eye-tracking research. We have focused on static heatmaps as a proof
of concept for the applicability of differential privacy (DP) to eye tracking data.
Insights from this work will inform future research on
privacy in eye tracking.
\section{Analyzing differential privacy of the proposed privacy-preserving mechanisms}\label{sec:theoretical}
In this section, we analyze the differential privacy of four natural random mechanisms. 
\textcolor{blue}{We show two of these mechanisms cannot preserve privacy under the notion of DP.
For the other two mechanisms, we provide theoretically guaranteed lower bounds on the noise required for any user-defined privacy level.}
Because a heatmap is created from aggregation of gaze maps, and because this
is a reversible (convolution) process, the privacy of a heatmap is
equivalent to that of the aggregated gaze map on which it is based.

\subsection{Notations}\label{subs:notation}
We use $n$ to denote the number of observers in the database and $r$ to denote the total number of pixels in the gaze maps. For example, an image of resolution $800\times 600$ corresponds to $r = 4.8\times 10^5$. We introduce an integer $m>1$ to cap every observer's gaze map. For example, if an observer looked at one pixel more than $m$ times, we only count $m$ in his/her gaze map.\footnote{Think of this as if the gaze map \textit{saturated}.} In Section~\ref{sec:cap}, we will discuss the privacy-utility trade off and provide an algorithm for finding the ``optimal cap''. Let $ G_i 
$ denote the $i-$th observer's personal gaze map (after applying cap). 
The aggregated gaze map of all $n$ observers in the database is denoted by $ G = \frac{1}{n}\sum_{i=1}^n  G_i$. Here, we normalize $G$ by the number of observers in order to compare the noise-level under different setups. To simplify notations, we use $\mathcal{G} = \left(G_1,\cdots,G_n\right)$ to denote the collection of all observers' gaze maps. Similarly, we use $\Gmi = \left(G_1,\cdots,G_{i-1},G_{i+1},G_n\right)$ to denote the collection of all observers' personal gaze maps except the $i-$th observer. Then, we will define  several gaze-map-aggregation mechanisms as follows:
\begin{itemize}
  \item $\calm_{\text{noise-free}}$: Directly output the aggregated gaze map. Formally,
  $\calm_{\text{noise-free}}( G_1,\cdots, G_n) =  G = \frac{1}{n}\sum_{i=1}^n  G_i$.
  \item $\calm_{\text{rs1}(c)}$: Randomly select $cn$ gaze maps from dataset (without replacement) and calculate aggregated gaze map accordingly. Formally, assuming the selected gaze maps are $ G_{j_1}\cdots,  G_{j_{cn}}$, $\calm_{\text{rs1}}( G_1,\cdots, G_n) =  G = \frac{1}{cn}\sum_{k=1}^{cn}  G_{j_k}$.
  \item $\calm_{\text{rs2}(c)}$: Similar with $\calm_{\text{rs1}(c)}$, the only difference is the sampling process is with replacement.
  \item $\calm_{\text{Gaussian}(\sigma_N)}$: Adding Gaussian noise with standard deviation (noise-level) $\sigma_N$ to all pixels independently. Formally, $\calm_{\text{Gaussian}(\sigma_N)}( G_1,\cdots, G_n) =  G +  \epsilon_{\sigma_N}$, where $ \epsilon_{\sigma_N}$ is a $r$ dimensional Gaussian noise term with zero mean and standard deviation $\sigma_N$ (all dimensions are mutually independent).
  \item $\calm_{\text{Laplacian}(\sigma_L)}$: Similar with $\calm_{\text{Gaussian}(\sigma_N)}$, the only difference is Laplacian noise with noise level $\sigma_L$ is added instead of Gaussian noise.
\end{itemize}
In short, $\calm_{\text{rs1}(c)}$ and $\calm_{\text{rs2}(c)}$ inject sampling noise to the output while $\calm_{\text{Gaussian}(\sigma_N)}$ and $\calm_{\text{Laplacian}(\sigma_L)}$ inject additive noise.

\subsection{Defining eye-tracking differential privacy}
\label{sec:dp2}
We start with re-phrasing the definition of $(\epsilon,\delta)-$differential privacy to eye tracking data. \textcolor{blue}{In the following discussion, we assume that the aggregated gaze map $G$ (or its noisy version) has been publicly released\footnote{Because DP focuses on worst case scenarios, the adversary also knows all other observers individual gazemaps.}.}
The goal of our research is to protect observers' personal gaze maps $ G_1,\cdots, G_n$ by adding appropriate noise to the aggregated gaze map. Using the notation in Section~\ref{subs:notation}, we assume that $\Gmi$, all gaze maps other than $ G_i$, are known by the adversary. For any set $\cals$ of outputting gaze maps, $(\epsilon,\delta)-$differential privacy is formally defined as follows.
\begin{definition}[$(\epsilon,\delta)-$DP]\label{def:dp2}
For any set of event $\cals$, any collection of gaze maps $\Gmi$ known by the adversary, we say a mechanism $\calm$ is $(\epsilon,\delta)-$differentially private if and only if
\begin{equation}\label{equ:dp2}
\Pr[\calm( G_i^*,\Gmi)\in \cals \mid \Gmi] \leq e^{\epsilon}\Pr[\calm( G_i^{**},\Gmi)\in \cals \mid \Gmi]+\delta,
\end{equation}
where $ G_i^*$ and $ G_i^{**}$ are any gaze maps of the $i-$th observer.
\end{definition}
According to differential privacy literatures \cite{dwork2014algorithmic}, there is no hard threshold between good and poor privacy. For the purpose of illustration, we define the following ``privacy levels'' in the remainder of this paper:
\begin{itemize}
  \item Poor privacy: $\delta = \Omega(1/n)$.
  \item Okay privacy: $\epsilon = 3$ and $\delta = n^{-3/2}$.
  \item Good privacy: $\epsilon = 1$ and $\delta = n^{-3/2}$.
\end{itemize}
Note ``okay privacy'' and ``good privacy'' are two examples we used for implementation. Practitioners can set their values of $\epsilon$ and $\delta$ according to their requirements (smaller $\epsilon$ and $\delta$ means better privacy). Note again $\delta = \Omega(1/n)$ is widely acknowledged as poor privacy \cite{dwork2014algorithmic}.  

\subsection{There is no free privacy}
\label{subs:nofree}
\textcolor{blue}{We first use $\calm_{\text{noise-free}}$ (poor privacy) as an example to connect intuition and the definition of DP.} Intuitively, if the adversary has the noiseless aggregated gaze map $ G$ and all other observers' gaze maps $\Gmi$, he/she can perfectly recover $ G_i$ by calculating $n G - \sum_{j\neq i}  G_j = \left(\sum_{j = 1}^n  G_j\right) - \sum_{j\neq i}  G_j =  G_i$.

Using Definition~\ref{def:dp2} and letting $ G_i^{*} =  G_i \neq  G_i^{**}$ and $\cals = \left\{ G\right\}$,
{\small $$\Pr[\calm( G_i^*,\Gmi)\in \cals \mid \Gmi] = 1\;\;\text{and}\;\;\Pr[\calm( G_i^{**},\Gmi)\in \cals \mid \Gmi] = 0,$$}
\noindent
because $ G$ will not be a possible output if $ G_i \neq  G_i^{*}$. Thus, we know $\delta$ can't be less than 1 to make Inequality~\ref{equ:dp2} hold. Considering $\delta = \Omega(1)$ corresponds to poor privacy, we know $\calm_{\text{noise-free}}$ has poor privacy in the language of $(\epsilon,\delta)-$DP defined in Definition~\ref{def:dp2}.

\subsection{Random selection gives poor privacy}
In Section~\ref{subs:notation}, we proposed two versions of random selection mechanisms. The first version ($\calm_{\text{rs1}}$) randomly selects $cn$ observers without replacement while the second version  ($\calm_{\text{rs2}}$) selects $cn$ with replacement.

\begin{theorem} [without replacement]\label{theo:rs1} Mechanism $\calm_{\text{rs1}}$ has poor privacy.
\end{theorem}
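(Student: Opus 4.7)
The plan is to prove the theorem by exhibiting a concrete worst-case instance that forces any valid $\delta$ to be bounded below by a constant, which is in particular $\Omega(1/n)$ and hence qualifies as poor privacy by the criterion of Section~\ref{sec:dp2}. The strategy is purely existential: since Definition~\ref{def:dp2} requires the inequality to hold for every event $\cals$, every ``known'' collection $\Gmi$, and every pair $G_i^*, G_i^{**}$, producing a single bad tuple suffices.

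The construction I would use sets $\Gmi = (0, 0, \ldots, 0)$ (all other observers' gaze maps equal to zero), picks any nonzero gaze map $v$, and takes $G_i^{*} = v$ and $G_i^{**} = 0$. Define the singleton event $\cals = \{ v/(cn) \}$. Under $\calm_{\text{rs1}(c)}$ applied to $(G_i^{*}, \Gmi)$, the $cn$ sampled maps average to $v/(cn)$ exactly when observer $i$ is included in the random sample and to the zero map otherwise; the marginal inclusion probability in uniform without-replacement sampling of $cn$ items from $n$ is $\binom{n-1}{cn-1}/\binom{n}{cn} = c$. Under $(G_i^{**}, \Gmi)$ every candidate gaze map is the zero vector, so the output is deterministically zero and the probability of landing in $\cals$ is exactly $0$.

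Plugging these two probabilities into Inequality~\ref{equ:dp2} yields $c \leq e^{\epsilon} \cdot 0 + \delta$, hence $\delta \geq c$. Since $c$ is a positive constant independent of $n$, this is $\Omega(1)$, which dominates the $\Omega(1/n)$ threshold that Section~\ref{sec:dp2} labels as poor privacy; the conclusion of the theorem follows.

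I do not expect a serious obstacle here, as the proof is essentially a carefully chosen counterexample rather than a quantitative estimate. The only subtleties are: (i) remembering that to refute $(\epsilon, \delta)$-DP it suffices to exhibit one bad quadruple $(\cals, \Gmi, G_i^{*}, G_i^{**})$, so we are free to pick the all-zero $\Gmi$ even though real gaze maps are nonnegative; and (ii) the elementary combinatorial identity giving marginal inclusion probability $c$ in without-replacement sampling. No further computation is required.
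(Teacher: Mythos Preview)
Your proposal is correct and takes essentially the same approach as the paper: construct the all-zero $\Gmi$, distinguish $G_i^*$ nonzero from $G_i^{**}=0$ via the singleton output event, and read off $\delta \ge c$ from the inclusion probability. The one point to tighten is your final sentence: the paper does not assume $c$ is a constant independent of $n$, so it concludes only $c=\Omega(1/n)$ from the constraint $cn\ge 1$ (at least one observer must be sampled), which is already enough for poor privacy; your stronger $\Omega(1)$ claim needs that extra assumption on $c$.
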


\begin{proof}
We prove $\calm_{\text{rs1}}$'s privacy by considering the following case: assuming resolution $r = 1$\footnote{This case also holds for $r>1$ because the first pixel already leaked information.}, all observers other than the $i-$th did not look at the only pixel, we have,
{\small
\begin{equation}\nonumber
\begin{split}
\Pr\left[\calm_{\text{rs1}}( G_1,\cdots,  G_n) =  \frac{1}{cn}\,\bigg|\, G_i = \textbf{1}, \Gmi = {0}\right] &= c\;\;\;\text{and}\\
\Pr\left[\calm_{\text{rs1}}( G_1,\cdots,  G_n) =  \frac{1}{cn}\,\bigg|\, G_i  = \textbf{0}, \Gmi = {0}\right] &= 0,
\end{split}
\end{equation}}
where $\Gmi = 0$ means all elements in collection $\Gmi$ equals to $0$. Thus, we know $\delta$ can't be less than $c$ to make (\ref{equ:dp2}) hold. Then, Theorem~\ref{theo:rs1} follows because $c = \Omega(1/n)$ ($cn = \Omega(1)$ is the number of observers selected).
\end{proof}

\begin{theorem} [with replacement]\label{theo:rs2} Mechanism $\calm_{\text{rs2}}$ has poor privacy.
\end{theorem}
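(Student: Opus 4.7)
\textbf{Proof plan for Theorem~\ref{theo:rs2}.} The plan is to mimic the structure of the proof of Theorem~\ref{theo:rs1} essentially verbatim, only replacing the hypergeometric ``selection with replacement'' probability by the binomial one. Concretely, I would take the same adversarial instance: resolution $r=1$, $G_i^{*}=\mathbf{1}$ versus $G_i^{**}=\mathbf{0}$, and $\Gmi=\mathbf{0}$. Pick the event
\[
\cals \;=\; \Big\{\,g \,:\, g \neq 0\,\Big\},
\]
i.e.\ the singleton event that the released pixel value is nonzero.

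Under $\calm_{\text{rs2}(c)}$ we draw $cn$ indices $j_1,\ldots,j_{cn}$ i.i.d.\ uniform on $\{1,\ldots,n\}$ and output $\frac{1}{cn}\sum_k G_{j_k}$. With the above choice, the output is nonzero if and only if observer $i$ is sampled at least once, so the count $K$ of times $i$ appears is $\mathrm{Binomial}(cn,1/n)$. Therefore
\[
\Pr[\calm_{\text{rs2}}(G_i^{*},\Gmi)\in\cals\mid\Gmi]
\;=\;1-(1-1/n)^{cn},
\qquad
\Pr[\calm_{\text{rs2}}(G_i^{**},\Gmi)\in\cals\mid\Gmi]
\;=\;0.
\]
Plugging into Definition~\ref{def:dp2} forces
\[
\delta \;\ge\; 1-\bigl(1-\tfrac1n\bigr)^{cn}.
\]
The last step is the arithmetic check that this lower bound is $\Omega(1/n)$: since $\calm_{\text{rs2}}$ samples at least one observer we have $cn\ge 1$, hence $(1-1/n)^{cn}\le 1-1/n$, giving $\delta\ge 1/n=\Omega(1/n)$. (For constant $c$, Bernoulli's inequality in fact yields the stronger bound $\delta\ge 1-e^{-c}=\Omega(1)$.) This is exactly the ``poor privacy'' regime defined in Section~\ref{sec:dp2}, which completes the proof.

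I expect no serious obstacle: the construction, the target event, and the $\delta$ lower bound are all analogous to Theorem~\ref{theo:rs1}, and the only new element is the binomial tail estimate, which is a one-line application of $(1-1/n)^{cn}\le 1-1/n$ for $cn\ge 1$. If I wanted a tighter bound one could instead argue from a single draw: since each of the $cn$ draws hits observer $i$ with probability $1/n$, a union-free lower bound $\Pr[K\ge 1]\ge 1/n$ would suffice, which is perhaps the most transparent presentation for the paper.
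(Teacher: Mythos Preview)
Your proposal is correct and follows essentially the same route as the paper: the same adversarial instance ($r=1$, $G_i^*=\mathbf{1}$, $G_i^{**}=\mathbf{0}$, $\Gmi=\mathbf{0}$), the same distinguishing event (the paper writes it as $\{g\ge 1/(cn)\}$, which coincides with your $\{g\ne 0\}$ here), and the same binomial computation $\delta\ge 1-(1-1/n)^{cn}$. Your final arithmetic step $(1-1/n)^{cn}\le 1-1/n$ is in fact cleaner than the paper's, which instead writes $1-(1-1/n)^{cn}\approx 1-e^{-c}=\Theta(c)$ and then invokes $c=\Omega(1/n)$.
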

Proof of Theorem~\ref{theo:rs2} (see Appendix~\ref{apd:thrors2} in Supplementary materials) is similar to the proof of Theorem~\ref{theo:rs1}. 

\subsection{Achieving good privacy with random noise}
In this section, we show that adding Gaussian or Laplacian noise can give good privacy if the noise level satisfies certain conditions based on user-defined privacy levels.

\subsubsection{Gaussian Noise}
Gaussian noise is widely used noise in many optical systems. In $\calm_{\text{Gaussian}(\sigma_N)}$, we add Gaussian noise with standard deviation $\sigma_N$ independently to all pixels of the aggregated gaze map. The probability density $p_N$ of outputting aggregated gaze map $ G^{(N)}$ is
{\small
\begin{equation}\label{equ:gaussian_lab}
\begin{split}
    &p_N\left(\calm_{\text{Gaussian}(\sigma_N)}( G_1,\cdots,  G_n) =  G^{(N)}\right)\\
    =\;& \frac{1}{\left(2\pi\sigma_N\right)^{r/2}}\cdot\exp\left(-\frac{\lnorm G^{(N)}- G\rnorm_2^2}{2\sigma_N^2}\right),
\end{split}
\end{equation}}
which is a $r$ dimensional Gaussian distribution such that all dimensions are independent. Note all $\ell_2$ norm in main paper and appendix represent Frobenius norm of matrices. 
For simplification, we use $p_N\left( G^{(N)}\right)$ to represent $p_N\left(\calm_{\text{Gaussian}(\sigma_N)}( G_1,\cdots,  G_n) =  G^{(N)}\right)$ when without ambiguity. The next Theorem shows announcing $ G^{(N)}$ ($\calm_{\text{Gaussian}(\sigma_N)}$'s output) will not give much information to adversary if the noise-level is as required (for any $(\epsilon,\delta)$, we can always find noise level $\sigma_N$ to guarantee $(\epsilon,\delta)-$DP).
\begin{theorem}[Gaussian Noise]\label{theo:gaussian1}
For any noise level $\sigma_N \geq \frac{m}{n\epsilon}\cdot\sqrt{r\left(\frac{\epsilon}{2}+\ln\frac{r}{\delta}\right)}$, $\calm_{\text{Gaussian}(\sigma_N)}$ is $(\epsilon,\delta)-$differentially private.
\end{theorem}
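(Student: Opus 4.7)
The plan is to verify Definition~\ref{def:dp2} directly by identifying the privacy-loss random variable of $\calm_{\text{Gaussian}(\sigma_N)}$ and bounding its upper tail with a Gaussian concentration argument. First I would fix any $\Gmi$ known to the adversary and any two candidate personal gaze maps $G_i^*, G_i^{**}$ for the $i$-th observer. Because the Gaussian noise is added post-aggregation and is data-independent, the output law conditioned on $(G_i^\star, \Gmi)$ is exactly $\mathcal{N}(G^\star, \sigma_N^2 I_r)$ with $G^\star = (G_i^\star + \sum_{j \neq i} G_j)/n$, so the task reduces to the $(\epsilon,\delta)$-indistinguishability of two $r$-dimensional spherical Gaussians whose means differ by $\Delta := G^* - G^{**} = (G_i^* - G_i^{**})/n$. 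The cap $m$ on each personal gaze map gives $|\Delta_k| \leq m/n$ at every pixel $k$, which immediately yields the key sensitivity estimate $\|\Delta\|_2 \leq m\sqrt{r}/n$.

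Using the density in~(\ref{equ:gaussian_lab}), the log-density ratio at an output $G^{(N)}$ equals
\begin{equation*}
L(G^{(N)}) \;=\; \frac{\|G^{(N)} - G^{**}\|_2^2 - \|G^{(N)} - G^*\|_2^2}{2\sigma_N^2} \;=\; \frac{2\langle G^{(N)} - G^*, \Delta\rangle + \|\Delta\|_2^2}{2\sigma_N^2}.
\end{equation*}
Under the $(G_i^*,\Gmi)$-distribution I can substitute $G^{(N)} - G^* = \eta \sim \mathcal{N}(0, \sigma_N^2 I_r)$, which makes $L$ a Gaussian scalar with mean $\|\Delta\|_2^2/(2\sigma_N^2)$ and variance $\|\Delta\|_2^2/\sigma_N^2$; this is the privacy-loss random variable.

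The crucial step is to show $\Pr[L > \epsilon] \leq \delta$ whenever $\sigma_N$ meets the stated threshold. I would control the fluctuation $\langle \eta, \Delta\rangle$ by applying the elementary Gaussian tail $\Pr[|\eta_k| > T] \leq 2\exp(-T^2/(2\sigma_N^2))$ at each pixel and union-bounding over all $r$ coordinates, yielding an envelope $\|\eta\|_\infty \leq T := \sigma_N\sqrt{2\ln(2r/\delta)}$ that fails with probability at most $\delta$; this is the origin of the $\ln(r/\delta)$ factor inside the square root of the theorem. On the complement of this failure event, I would bound $|\langle \eta, \Delta\rangle|$ in terms of $\|\Delta\|_2$ and the envelope, and then split the privacy budget so that this fluctuation contribution to $L$ is at most $\epsilon/2$ while the deterministic mean contribution $\|\Delta\|_2^2/(2\sigma_N^2)$ is at most $\epsilon/2$ as well; the additive $\epsilon/2$ in the theorem is precisely this budget split, and substituting $\|\Delta\|_2 \leq m\sqrt{r}/n$ recovers both the $\sqrt{r}$ prefactor and the $m/(n\epsilon)$ scaling.

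Finally I convert the tail bound to Definition~\ref{def:dp2}. Setting $\cals_{\text{bad}} := \{G^{(N)} : L(G^{(N)}) > \epsilon\}$, with $\Pr[\cals_{\text{bad}} \mid G_i^*] \leq \delta$ from the previous step, for any event $\cals$
\begin{align*}
\Pr[\calm_{\text{Gaussian}(\sigma_N)}(G_i^*, \Gmi) \in \cals \mid \Gmi]
&\leq \Pr[\cals \setminus \cals_{\text{bad}} \mid G_i^*] + \Pr[\cals_{\text{bad}} \mid G_i^*]\\
&\leq e^\epsilon \Pr[\calm_{\text{Gaussian}(\sigma_N)}(G_i^{**}, \Gmi) \in \cals \mid \Gmi] + \delta,
\end{align*}
which is exactly Inequality~(\ref{equ:dp2}); the symmetric direction follows by swapping $G_i^*$ and $G_i^{**}$. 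The main obstacle is the concentration step: reconciling the three sources of $r$- and $\epsilon$-dependence (the $\sqrt{r}$ from the $\ell_2$ sensitivity, the $\ln(r/\delta)$ from the per-coordinate union bound, and the $\epsilon/2$ from the mean-vs-fluctuation budget split) so that they land in the exact closed form $\sqrt{r(\epsilon/2 + \ln(r/\delta))}$ without having to solve a quadratic in $\sigma_N$; this is what forces the slightly loose union-bound route rather than the tightest direct Gaussian tail on $\langle \eta, \Delta\rangle$.
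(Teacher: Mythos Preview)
Your overall architecture matches the paper's: compute the log-density ratio, bound the $\ell_2$ sensitivity by $\|\Delta\|_2\le m\sqrt{r}/n$, carve out a ``good'' region where the ratio is at most $e^\epsilon$, control the complementary probability by $\delta$, and finish with the bad-set decomposition. Where you diverge is the concentration step, and that divergence creates a real gap.

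The paper does \emph{not} control $\|\eta\|_\infty$. It applies Cauchy--Schwarz to the log-ratio itself,
\[
L(G^{(N)})\;\le\;\frac{2\|\eta\|_2\,\|\Delta\|_2+\|\Delta\|_2^2}{2\sigma_N^2},
\]
so the random quantity whose tail has to be bounded is $\|\eta\|_2$, and the per-coordinate union bound is applied to the event $\{\|\eta\|_2>A\}$ with $A=\tfrac{n}{m\sqrt{r}}\epsilon\sigma_N^2-\tfrac{m\sqrt{r}}{2n}$. Feeding that $A$ back into a coordinatewise Gaussian tail is exactly what produces the closed form $\sqrt{r(\epsilon/2+\ln(r/\delta))}$.

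Your route---first confining $\|\eta\|_\infty\le T=\sigma_N\sqrt{2\ln(2r/\delta)}$ and then ``bounding $|\langle\eta,\Delta\rangle|$ in terms of $\|\Delta\|_2$ and the envelope''---cannot reach the stated threshold. From $\|\eta\|_\infty\le T$ the only way to reintroduce $\|\Delta\|_2$ is via $\|\eta\|_2\le\sqrt{r}\,T$ (equivalently H\"older with $\|\Delta\|_1\le\sqrt{r}\,\|\Delta\|_2$), giving
\[
\frac{|\langle\eta,\Delta\rangle|}{\sigma_N^2}\;\le\;\frac{\sqrt{r}\,T\,\|\Delta\|_2}{\sigma_N^2}\;\le\;\frac{rm\sqrt{2\ln(2r/\delta)}}{n\sigma_N},
\]
and setting this $\le\epsilon/2$ forces $\sigma_N\gtrsim\tfrac{mr}{n\epsilon}\sqrt{\ln(r/\delta)}$, a factor $\sqrt{r}$ worse than the theorem. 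Since you already noted that $L$ is a \emph{scalar} Gaussian with variance $\|\Delta\|_2^2/\sigma_N^2$, the cleanest repair is to bound its tail directly with no union bound at all (which would even improve the paper's $\ln(r/\delta)$ to $\ln(1/\delta)$); but to land on the exact expression in the statement you must follow the paper and work with $\|\eta\|_2$, not $\|\eta\|_\infty$.
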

Theorem~\ref{theo:gaussian1} basically says we can always find a noise level $\sigma_N$ to meet any user-defined privacy level (any $\epsilon$ and $\delta$).
\begin{proof}
Let $ G_i^{*}$ and $ G_i^{**}$ to denote any two possible gaze maps of the $i-$th observer. To simplify notation, we use $\Gmiv = \frac{1}{n-1}\sum_{j\neq i}  G_{j}$ to denote the aggregated gaze map from observers other than the $i-$th. If the $i-$th observer's gaze map is $ G_i^*$, the probability density of the outputting $p_N( G^{(N)}\mid  G_i =  G_i^*)$ is 
{\footnotesize $$p_N( G^{(N)}\mid  G_i =  G_i^*) = \frac{1}{\left(2 \pi \sigma_N\right)^{r/2}}\exp\left(-\frac{1}{2\sigma_N^2} \lnorm\frac{ G_i^*}{n} + \frac{n-1}{n}\Gmiv -  G^{(N)}\rnorm_2^2\right),$$}
Similarly, if the $i-$th observer's gaze map is $ G_i^{**}$, we have,
{\footnotesize$$p_N( G^{(N)}\mid  G_i =  G_i^{**}) = \frac{1}{\left(2 \pi \sigma_N\right)^{r/2}}\exp\left(-\frac{1}{2\sigma_N^2} \lnorm\frac{ G_i^{**}}{n} + \frac{n-1}{n}\Gmiv -  G^{(N)}\rnorm_2^2\right),$$}
For any $ G_i^{*}$,  $ G_i^{**}$ and  $\Gmiv$, we have,
{\footnotesize \begin{equation}\nonumber
\begin{split}
&\frac{p_N( G^{(N)}\mid  G_i =  G_i^{**})}{p_N( G^{(N)}\mid  G_i =  G_i^{*})}\\
=\;& \exp\left(\frac{1}{2\sigma_N^2}\cdot\left(\lnorm\frac{ G_i^{*}}{n} + \frac{n-1}{n}\Gmiv -  G^{(N)}\rnorm_2^2 - \lnorm\frac{ G_i^{**}}{n} + \frac{n-1}{n}\Gmiv -  G^{(N)}\rnorm_2^2\right)\right)\\
\leq\;& \exp\left(\frac{2\lnorm\frac{ G_i^*}{n} + \frac{n-1}{n}\Gmiv -  G^{(N)}\rnorm_2\cdot\lnorm G_i^{**}- G_i^*\rnorm_2 + \lnorm G_i^{**}- G_i^{*}\rnorm_2^2}{2\sigma_N^2}\right).
\end{split}
\end{equation}}
Letting $ \mu = \frac{ G_i^*}{n} + \frac{n-1}{n}\Gmiv$ and considering $\lnorm G_i^{**}- G_i^*\rnorm_2 \leq \frac{m\sqrt{r}}{n}$, we have,
{\small
$$\frac{p_N( G^{(N)}\mid  G_i =  G_i^{**})}{p_N( G^{(N)}\mid  G_i =  G_i^{*})} \leq \exp\left(\frac{\frac{2m\sqrt{r}}{n}\lnorm  G^{(N)}-\mu \rnorm_2 + \frac{m^2r}{n^2}}{2\sigma_N^2}\right).$$}
Thus, for any $  G^{(N)}$ such that $\lnorm G^{(N)}-\mu\rnorm_2 \leq \frac{n}{m\sqrt{r}}\epsilon\sigma^2-\frac{m\sqrt r}{2n}$, the $\epsilon$ requirement of DP is always met. Then, we bound the tail probability for all cases where $\epsilon$'s requirement is not met.
{\small
\begin{equation}\nonumber
\begin{split}
&\Pr\left[\lnorm G^{(N)} - \mu\rnorm_2  > \frac{n}{m\sqrt{r}}\epsilon\sigma_N^2-\frac{m\sqrt r}{2n}\right]\\
\leq\;& \sum_{j=1}^r \Pr\left[\left| G_j^{(N)} - \mu_j\right|  > \frac{n}{m\sqrt{r}}\epsilon\sigma_N^2-\frac{m\sqrt r}{2n}\right]\\
\leq\;& r\cdot\exp\left(-\frac{n^2\epsilon^2\sigma_N^2}{2m^2r} + \frac{\epsilon}{4}\right).\\
\end{split}
\end{equation}}
When $\sigma_N \geq \frac{m}{n\epsilon}\cdot\sqrt{r\left(\frac{\epsilon}{2}+\ln\frac{r}{\delta}\right)}$, we have,
{\small
$$\Pr\left[\lnorm G^{(N)} - \mu\rnorm_2  > \frac{n}{m\sqrt{r}}\epsilon\sigma_N^2-\frac{m\sqrt r}{2n}\right] \leq r\cdot\exp\left(-\frac{n^2\epsilon^2\sigma_N^2}{2m^2r} + \frac{\epsilon}{4}\right) \leq \delta$$}
Then, Theorem~\ref{theo:gaussian1} follows by the definition of $(\epsilon,\delta)-$DP.
\end{proof}

\subsubsection{Laplacian Noise}
Laplacian noise is the most widely used in many differential privacy problems. However, we will show Laplacian noise is not as suitable as Gaussian noise for protecting eye tracking data. The next Theorem shows 
$ G^{(L)}$ will not give much information to the adversary if the noise-level is as required.
\begin{theorem}[Laplacian Noise]\label{theo:lpls1}
Using the notations above, for any $\sigma_L \geq \frac{\sqrt{2}\cdot mr}{\epsilon n}$, $\calm_{\text{Laplacian}(\sigma_L)}$ is $(\epsilon,0)-$differentially private.
\end{theorem}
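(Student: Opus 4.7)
The plan is to mirror the structure of the Gaussian proof but replace squared $\ell_2$ distances with $\ell_1$ distances, and to exploit the fact that Laplace density ratios admit a \emph{uniform} bound, which removes the need for any tail argument and hence allows $\delta = 0$. Writing $b = \sigma_L/\sqrt{2}$ for the Laplace scale parameter corresponding to standard deviation $\sigma_L$, I would begin by noting that the output density factorizes across the $r$ pixels as
\begin{equation*}
p_L(G^{(L)} \mid G_i) \;=\; \prod_{j=1}^r \frac{1}{2b}\exp\!\left(-\frac{|G^{(L)}_j - \mu_j|}{b}\right) \;=\; \frac{1}{(2b)^r}\exp\!\left(-\frac{\lnorm G^{(L)}-\mu \rnorm_1}{b}\right),
\end{equation*}
where $\mu = G_i/n + \frac{n-1}{n}\Gmiv$ plays the same role as in the Gaussian proof.

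\textbf{Key steps.} Next I would form the ratio of densities for $G_i = G_i^{*}$ versus $G_i = G_i^{**}$ and apply the reverse triangle inequality to get
\begin{equation*}
\left|\lnorm G^{(L)}-\mu^{*}\rnorm_1 - \lnorm G^{(L)}-\mu^{**}\rnorm_1\right| \;\leq\; \lnorm \mu^{*}-\mu^{**}\rnorm_1 \;=\; \frac{1}{n}\lnorm G_i^{*}-G_i^{**}\rnorm_1 \;\leq\; \frac{mr}{n},
\end{equation*}
where the last inequality uses the pixelwise cap $m$ summed over $r$ pixels. This bounds the density ratio by $\exp(mr/(nb)) = \exp(\sqrt{2}\,mr/(n\sigma_L))$ uniformly in $G^{(L)}$, so the hypothesis $\sigma_L \geq \sqrt{2}\,mr/(n\epsilon)$ immediately yields a uniform bound of $e^{\epsilon}$. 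Since this holds pointwise in the output, integrating over any event $\cals$ gives $\Pr[\calm(G_i^{*},\Gmi)\in\cals] \leq e^\epsilon \Pr[\calm(G_i^{**},\Gmi)\in\cals]$ with no residual $\delta$ term, which is exactly $(\epsilon,0)$-DP per Definition~\ref{def:dp2}.

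\textbf{Expected obstacle.} The analytical work is short; the most error-prone step will be bookkeeping the $\sqrt{2}$ conversion between the Laplace scale parameter $b$ and the standard deviation $\sigma_L$, and confirming that the worst-case $\ell_1$ sensitivity $\lnorm G_i^{*}-G_i^{**}\rnorm_1 \leq mr$ is actually attained (realized when $G_i^{*}$ is zero and $G_i^{**}$ saturates every pixel at $m$), so the required $\sigma_L$ cannot be tightened by this argument alone. The conceptual point I would emphasize is that the Laplace mechanism delivers pure $\epsilon$-DP precisely because the $\ell_1$-based density ratio admits a uniform bound, whereas the Gaussian analogue had to discard a tail event and therefore incurred a $\delta$. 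The price, already visible from the bound $\sigma_L \propto mr$ compared with $\sigma_N \propto m\sqrt{r\log(r/\delta)}$ in Theorem~\ref{theo:gaussian1}, is that Laplacian noise scales linearly in the pixel count while Gaussian noise scales only like $\sqrt{r}$---presumably the comparison the paper will use to argue that the Gaussian mechanism is the more attractive choice at realistic image resolutions.
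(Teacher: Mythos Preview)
Your proposal is correct and follows essentially the same route as the paper's proof: write the Laplace density in $\ell_1$ form, bound the log-density ratio by $\sqrt{2}\,\lnorm G_i^{*}-G_i^{**}\rnorm_1/(n\sigma_L) \le \sqrt{2}\,mr/(n\sigma_L)$ via the triangle inequality, and observe that this uniform bound on the density ratio integrates to the same bound on probabilities, yielding $(\epsilon,0)$-DP with no tail term. Your explicit handling of the $b=\sigma_L/\sqrt{2}$ conversion and the remark contrasting the $r$ versus $\sqrt{r}$ scaling with the Gaussian case are exactly the points the paper makes as well.
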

Proof of Theorem~\ref{theo:lpls1} (see Appendix~\ref{apd:throlpls1} in supplementary material) is very similar with Theorem~\ref{theo:gaussian1}.
However, the required noise level, $\sigma_L \geq \frac{\sqrt{2}\cdot mr}{\epsilon n}$, normally is much higher than the requirement of Gaussian noise, $\sigma_N \geq \frac{m}{n\epsilon}\cdot\sqrt{r\left(\frac{\epsilon}{2}+\ln\frac{r}{\delta}\right)}$.  One can see the Laplacian mechanism requires one more $\sqrt r$ term on noise level, which normally corresponds to $\sim 10^2$ times higher noise level.

\begin{figure*}[!ht]
     \subfloat[Surfaces for a chosen privacy level.\label{subfig-1:noisesurface}]{%
       \includegraphics[width=0.24\textwidth]{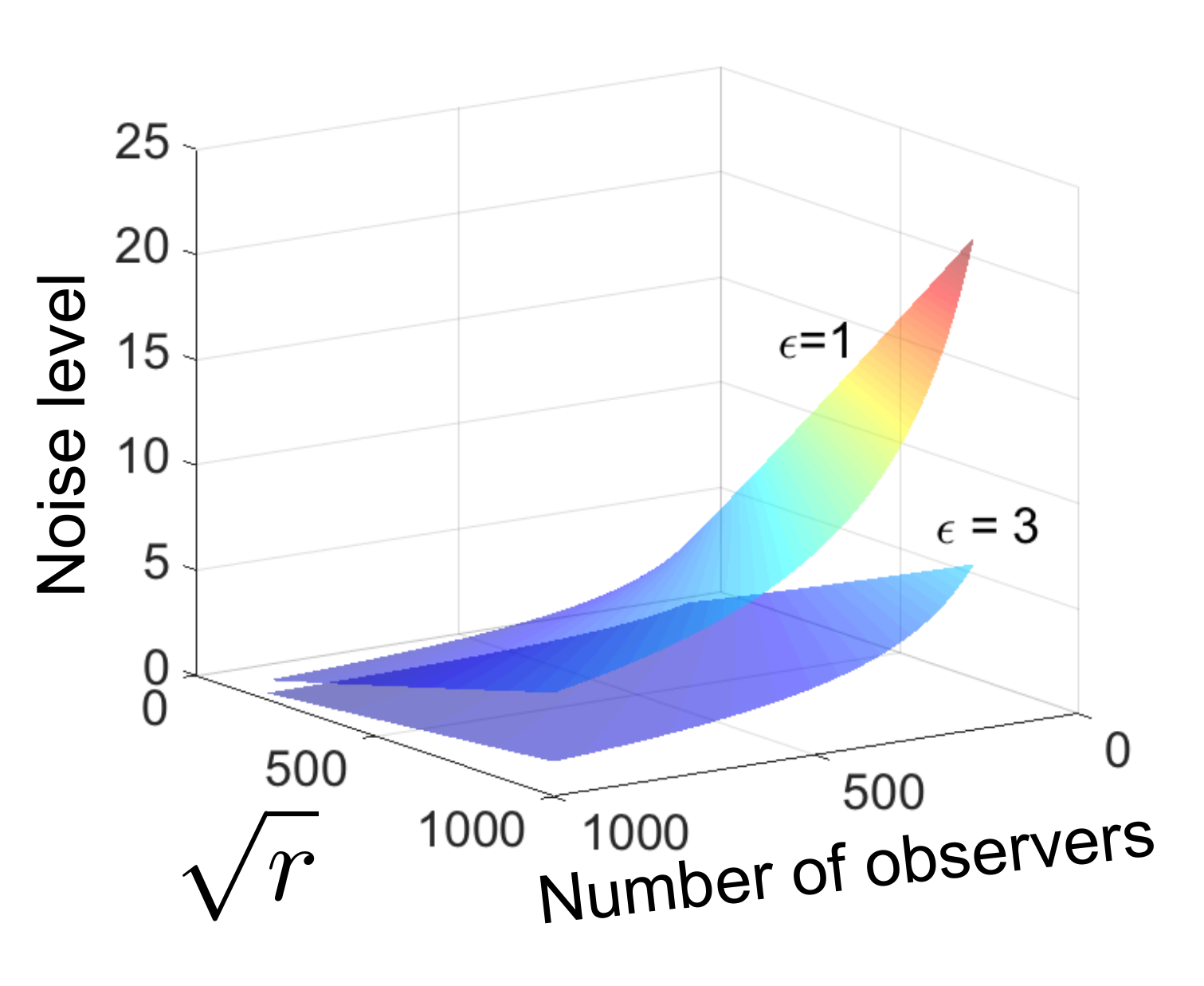}
     }
     \subfloat[Slice of the surface for $300\times300$ images. 
     \label{subfig-2:noisesurface}]{%
       \includegraphics[width=0.24\textwidth]{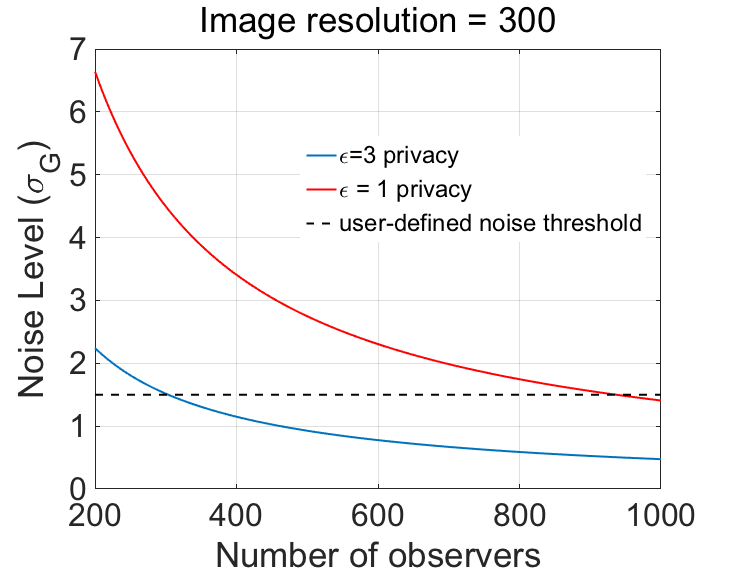}
     }
     \subfloat[Original heatmap.\label{subfig-3:noisesurface}]{%
       \includegraphics[width=0.24\textwidth]{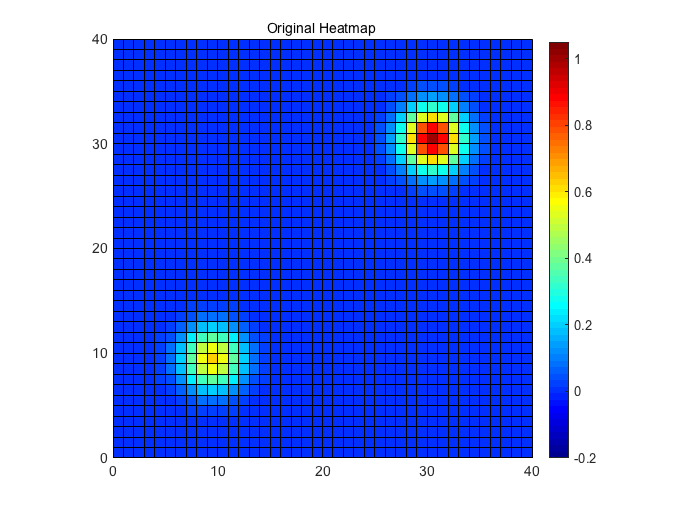}
     }
     \subfloat[Privacy enhanced heatmap for $\epsilon=1$ level.\label{subfig-4:noisesurface}]{%
       \includegraphics[width=0.24\textwidth]{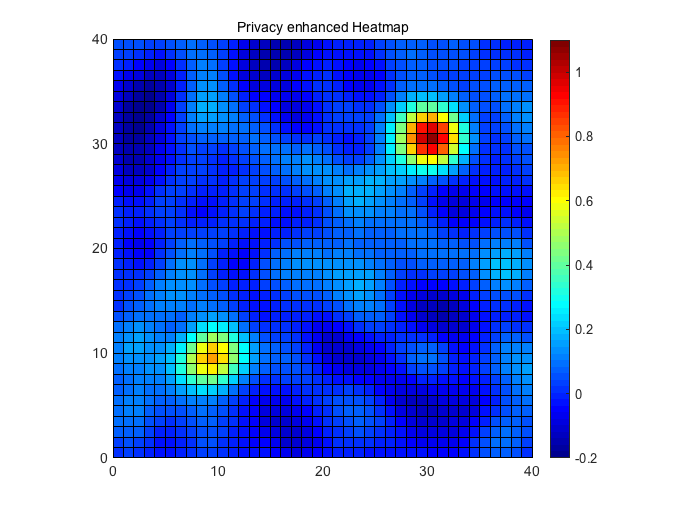}
     }
     \caption{We examine the privacy-utility tradeoff for selected values of $\sigma_N$ for a simulated heatmap. The greater the noise level we choose to add, the stronger is the privacy guarantee. The relevant stakeholders decide what level of noise is acceptable for a given application. For example, in Figure~\ref{subfig-4:noisesurface}, the hotspots are still clear, and a UX designer may find this acceptable for the purpose of getting feedback on the design of a website.}
     \label{fig:noisesurface}
\end{figure*}


\section{Privacy-utility tradeoff}\label{sec:put}
According to Theorem~\ref{theo:gaussian1} and Theorem~\ref{theo:lpls1}, we know better privacy (smaller $\epsilon$ and $\delta$) usually requires higher noise level. In this section, we will conduct experiments to show how Gaussian and Laplacian noise influence the utility, i.e., the corresponding heatmap.

\subsection{Noise level vs. information loss}
In Figure~\ref{fig:noisesurface}(a), we show a three-dimensional plot where the x and y axes are $\sqrt{r}$ and $n$ respectively. The reader may revisit notations in Section~\ref{subs:notation}. On the vertical z-axis, we plot $\sigma_N$, specifically based on the formula given by Theorem~\ref{theo:gaussian1}. The upper surface shows $\sigma_N$ for good privacy ($\epsilon = 1$ and $\delta = n^{-3/2}$).  The lower surface shows for okay privacy ($\epsilon = 3$ and $\delta = n^{-3/2}$).  Any value of $\sigma_N$ above this surface will provide okay privacy, and any value above the upper surface will provide good privacy.

In Figure~\ref{fig:noisesurface}\,(a), as the image resolution increases, a larger number of observers is needed in the dataset to maintain the guarantee of good privacy. If there is a small number of observers, good privacy can be achieved by downsampling the image. In Figure~\ref{fig:noisesurface}\,(b) we show a slice of this surface at $\sqrt{r}=300$. The dotted lines show an example noise level that we could have set based on what we find acceptable for utility. This is of course user-defined, and will vary depending on the application. The graphs illustrate that at a selected noise level, e.g., $\sigma_N=1.5$, we can achieve good privacy for a $300\times300$ image if we have of the order of $n=900$ observers. For a dataset that has $n=300$ observers, we can tell the participants that we can achieve Okay privacy. We show two simulated heatmaps in Figure~\ref{fig:noisesurface}\,(c) and\,(d). The location of the hotspots is unchanged for all practical purposes in the noisy but private heatmap.

We quantify the privacy-utility tradeoff in Figure~\ref{fig:tradeoff_plot}. 100 noisy heatmaps are generated using the workflow in Figure~\ref{fig:workflow}. Real-world $1050\times1680$ gaze maps from five observers looking at a portrait of a woman are used here.\nolinebreak\footnote{%
    Gaze data from dataset of
    \citeN{raiturkar2018},
    stimulus image from
    \citeN{farid2012perceptual} and \citeN{mader2017identifying}.
}
The original heatmap is shown in Figure~\ref{fig:workflow} to the right. For the purpose of the noisy heatmap, we assume the number of observers in dataset  is $50,000$\footnote{If the number of observer is much smaller than 50,000, the practitioner could either down-sample gaze maps or sacrifice privacy (setting larger $\epsilon$ and $\delta$) to get an acceptable noise level.} (the noise is added according to $n = 50,000$ and Theorem~\ref{theo:gaussian1} and Theorem~\ref{theo:lpls1}). We simulate this large number of observers by replicating each of the five real observers $10000$ times. 

In the supplementary materials, we show the original heatmap overlaid on the stimulus image in high resolution (original.png). We also show examples of privacy enhanced heatmaps for this original heatmap at the $\epsilon=1.5$ privacy level (privancyenhanced.mpg). For this image resolution, $\sigma_G=0.0986$ based on Theorem~\ref{theo:lpls1}.

We numerically analyzed \emph{correlation coefficient} (CC) and \emph{mean square error} (MSE) of noisy heatmaps under different privacy levels (different values of $\epsilon$ while fixing $\delta = n^{-3/2}$). The cap $m = 1$ is decided according to Algorithm~\ref{alg:findm} (see Section~\ref{sec:cap} for details). 100 noisy heatmaps are generated under each setting. The average CC and MSE of those generated noisy heatmaps are plotted in Figure~\ref{fig:tradeoff_plot}. Error bars in Figure~\ref{fig:tradeoff_plot} represent the standard deviations.

It can be seen from the Figure~\ref{fig:tradeoff_plot} that Laplacian mechanism results in much more information loss than Gaussian mechanism to achieve same level of privacy under our setting. For both Gaussian and Laplacian mechanisms, one can see that better privacy (smaller $\epsilon$) usually means more information loss in the outputting heatmap. 

We note that these graphs are based on real data of only five observers on one stimulus image. This graph is an example of how a practitioner may visualize the privacy-utility tradeoff in any given application domain. In practice, stakeholders would use our proposed workflow on their dataset to prepare such visualizations for different settings of the internal parameters ($m,\epsilon,\delta$) to help them evaluate the privacy-utility tradeoff. We note also that Theorem~\ref{theo:gaussian1} is specific to aggregate heatmaps. For any other mechanism, the appropriate theorems would need to be worked out and the workflow modified to be consistent with the problem definition. We also point out that while mean squared error and cross-correlation are readily computed, they do not fully reflect the information lost or retained when noise is added. As an illustration, in Figure~\ref{fig:noisesurface}, the hotspots in the privacy enhanced heatmap are still clear, and a UX designer may find that the heatmap acceptable for their use case even though the MSE and CC metrics suggest otherwise.

\begin{figure}
    \centering
    \includegraphics[width=0.5\textwidth]{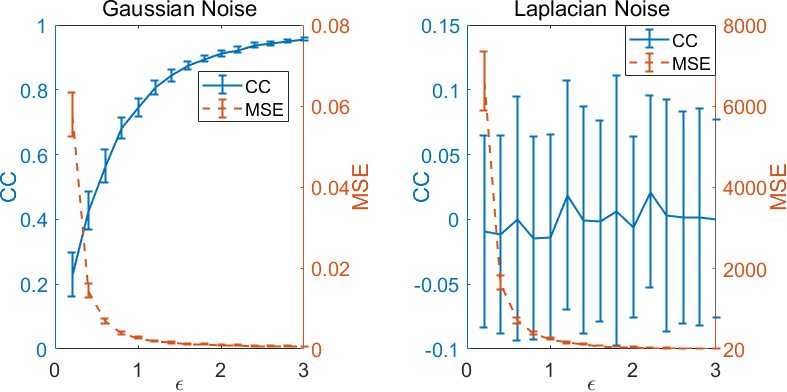}
    \caption{Similarity between the privacy enhanced heatmap and original heatmap when $\epsilon$ is varied. The smaller the value of $\epsilon$ the stronger is the privacy guarantee from the DP perspective. This graph illustrates the privacy-utility tradeoff: as $\epsilon$ is made smaller, the mean squared error increases and the cross-correlation decreases. The Laplacian mechanism results in lesser similarity than the Gaussian mechanism.}
    \label{fig:tradeoff_plot}
\end{figure}

\subsection{Computing the optimal ``cap''}\label{sec:cap}
In order to achieve better privacy with less information loss, we set a cap on the maximum number of times an observer's gaze position falls on a pixel. This cap was denoted by $m$ in Section~\ref{sec:theoretical}. Here we discuss the information loss on different settings on $m$.

When $m$ is larger, higher noise level is required to get the same privacy (both the upper bound for $\sigma_N$ and $\sigma_L$ are proportional to $m$). However, larger $m$ also corresponds to less information loss on every observer's gaze map. In other words, there is tradeoff between variance (noise) and bias (cap) on cap. Let $ G^{(N,m,\sigma^*)}$ to denote the gaze map outputted by Gaussian mechanism with cap $m$ and noise level $ \sigma_N = m\sigma^*$. Thus, $ G^{(N,\infty,0)}$ denotes the original aggregated gaze map and $ G^{(N,m,0)}$ denotes the aggregated gaze map with cap $m$ without adding any Gaussian noise. Algorithm~\ref{alg:findm} provides an implementable way to choose the best value of $m$ to optimize \emph{mean square error} (MSE).
\begin{algorithm}[htp]
\caption{Utility optimization algorithm on choosing $m$}\label{alg:findm}
\textbf{Input:} {Individual gaze maps $ G_1,\cdots, G_n$ and noise factor $\sigma^{*}$.}\\
\textbf{Initialization:} Calculate $ G^{(N,\infty,0)}$ and the maximum number of times one observer look at one pixel: $g_{\max} = \max_{i,j}\left[G_i(j)\right]$.\\
\For {$m = 1, \cdots, g_{\max}$}
{
    Calculate $\E\left[\text{MSE}\left( G^{(N,m,\sigma^*)}\right)\right]$ according to Theorem~\ref{theo:m}.
}
\textbf{Output:} {$m$ value with the smallest expected MSE.}\\
\end{algorithm}
In the next theorem, we will analytically analyze the expectation of MSE.
\begin{theorem}\label{theo:m}
The expected MSE of Gaussian mechanism with cap $m$ and noise level $\sigma_N = m\sigma^*$ is
{\small$$\E\left[\text{MSE}\left( G^{(N,m,\sigma^*)}\right)\right] = m^2\sigma^{*2}+\frac{1}{r} \sum_{j=1}^r \left(G^{(N,m,0)}(j) - G^{(N,\infty,0)}(j)\right)^2.$$}
\end{theorem}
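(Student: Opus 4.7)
The plan is to recognize this as a standard bias--variance decomposition, with the ``bias'' coming from the cap $m$ and the ``variance'' coming from the additive Gaussian noise, and then to prove it by direct expansion of the squared error pixel by pixel.

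First, I would unpack the definitions. By the definition of $\calm_{\text{Gaussian}(\sigma_N)}$ with $\sigma_N = m\sigma^*$, the per-pixel output can be written as
\begin{equation*}
 G^{(N,m,\sigma^*)}(j) \;=\;  G^{(N,m,0)}(j) + \epsilon_j,
\end{equation*}
where $\epsilon_1,\dots,\epsilon_r$ are i.i.d. $\mathcal{N}(0,m^2\sigma^{*2})$ (independence across pixels is exactly the assumption baked into $\calm_{\text{Gaussian}(\sigma_N)}$), and $ G^{(N,m,0)}$ is the (deterministic, non-random) aggregated gaze map after applying the cap $m$ but before adding any noise. The ``ground truth'' against which we measure MSE is naturally the uncapped noiseless aggregate $ G^{(N,\infty,0)}$, since that is the quantity the mechanism is trying to approximate.

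Next I would write the MSE as
\begin{equation*}
\text{MSE}\bigl( G^{(N,m,\sigma^*)}\bigr) \;=\; \frac{1}{r}\sum_{j=1}^r \bigl( G^{(N,m,0)}(j) - G^{(N,\infty,0)}(j) + \epsilon_j\bigr)^2,
\end{equation*}
expand the square pixelwise, and take expectation. The cross term $2\epsilon_j\bigl( G^{(N,m,0)}(j) -  G^{(N,\infty,0)}(j)\bigr)$ vanishes because $\E[\epsilon_j]=0$ and the factor multiplying $\epsilon_j$ is deterministic. The pure noise term gives $\E[\epsilon_j^2] = m^2\sigma^{*2}$ for every pixel, so averaging over the $r$ pixels still yields $m^2\sigma^{*2}$. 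The remaining deterministic term is exactly the bias contribution $\frac{1}{r}\sum_j \bigl( G^{(N,m,0)}(j) -  G^{(N,\infty,0)}(j)\bigr)^2$, which matches the claimed expression.

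There is no real obstacle: the argument is just linearity of expectation plus independence of the added Gaussian noise from the (deterministic) capped aggregated gaze map. The only subtle modeling point worth stating explicitly before the calculation is the choice of $ G^{(N,\infty,0)}$ as the reference map and the fact that $ G^{(N,m,0)}$ is nonrandom once the individual $G_i$'s are fixed, so that it can be pulled out of the expectation; after that, the identity falls out in a few lines.
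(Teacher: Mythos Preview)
Your proposal is correct and is essentially the same argument as the paper's own proof: both compute $\E\bigl[(G^{(N,m,\sigma^*)}(j)-G^{(N,\infty,0)}(j))^2\bigr]$ pixelwise using that $G^{(N,m,\sigma^*)}(j)\sim\caln(G^{(N,m,0)}(j),m^2\sigma^{*2})$, and the vanishing cross term plus the second-moment identity yield the stated bias-plus-variance decomposition. The only cosmetic difference is that you introduce the noise variable $\epsilon_j$ explicitly and group as $(\text{bias}+\epsilon_j)^2$, whereas the paper expands $(G^{(N,m,\sigma^*)}(j)-G^{(N,\infty,0)}(j))^2$ directly and substitutes the first two moments; these are the same computation.
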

\begin{proof}
By the definition of MSE, we have,
$$\E\left[\text{MSE}\left( G^{(N,m,\sigma^*)}\right)\right] = \frac{1}{r}\sum_{j=1}^r\E\left[{\left(G^{(N,m,\sigma^*)}(j) - G^{(N,\infty,0)}(j)\right)^2}\right]$$
Using the notations defined above, the expected square error on $j-$th pixel is 
{\footnotesize
\begin{equation}\label{equ:m1}
\begin{split}
&\E\left[\left(G^{(N,m,\sigma^*)}(j) - G^{(N,\infty,0)}(j)\right)^2\right]\\
=\;&\E\left[\left(G^{(N,m,\sigma^*)}(j)\right)^2\right] - 2G^{(N,\infty,0)}(j)\cdot\E\left[G^{(N,m,\sigma^*)}(j)\right] + \left(G^{(N,\infty,0)}(j)\right)^2.
\end{split}
\end{equation}}
because $G^{(N,m,\sigma^*)}_j \sim \caln\left(G_j^{(N,m,0)},m^2\sigma^{*2}\right)$, we have,
\begin{equation}\label{equ:m2}
\begin{split}
\E\left[\left(G^{(N,m,\sigma^*)}(j)\right)\right] &= G^{(N,m,0)}(j)\;\;\;\text{and}\\
\E\left[\left(G^{(N,m,\sigma^*)}(j)\right)^2\right] &= \sigma^{*2}m^2 + \left(G^{(N,m,0)}(j)\right)^2.
\end{split}
\end{equation}
Theorem~\ref{theo:m} follows by combining (\ref{equ:m1}) and (\ref{equ:m2}).
\end{proof}

Then, we analyze the complexity of Algorithm~\ref{alg:findm} in the next theorem, which says Algorithm~\ref{alg:findm} is with linear-time complexity.
\begin{theorem}\label{theo:complex}
The complexity of Algorithm~\ref{alg:findm} is $O\left(g_{\max}\cdot nr\right)$.
\end{theorem}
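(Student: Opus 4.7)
The plan is to bound the running time of Algorithm~\ref{alg:findm} by separately accounting for the initialization step and the $g_{\max}$ iterations of the main loop, and to observe that each individual operation either touches every pixel in every observer's gaze map (cost $O(nr)$) or only the $r$ pixels of an aggregated map (cost $O(r)$). Because $g_{\max}$ such iterations are executed and the initialization is dominated by the loop, the bound $O(g_{\max}\cdot nr)$ will fall out immediately.

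First I would handle the initialization. Computing the un-capped aggregated gaze map $G^{(N,\infty,0)} = \frac{1}{n}\sum_{i=1}^n G_i$ requires summing $n$ gaze maps with $r$ entries each, which takes $O(nr)$ time. Computing $g_{\max} = \max_{i,j}\bigl[G_i(j)\bigr]$ likewise requires a single scan over all $nr$ pixel-observer pairs, so the initialization fits in $O(nr)$.

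Next I would bound the cost of a single iteration of the \textbf{for} loop for a fixed value of $m$. To evaluate the closed form of $\E\bigl[\text{MSE}(G^{(N,m,\sigma^*)})\bigr]$ supplied by Theorem~\ref{theo:m}, we need (i) the scalar $m^2\sigma^{*2}$, which is $O(1)$; (ii) the capped aggregated gaze map $G^{(N,m,0)}(j) = \frac{1}{n}\sum_{i=1}^n \min\{G_i(j), m\}$ for every pixel $j$, which takes $O(nr)$ because it reads each entry of each $G_i$ once and applies a cap-and-sum; and (iii) the $r$-term sum $\sum_{j=1}^r \bigl(G^{(N,m,0)}(j)-G^{(N,\infty,0)}(j)\bigr)^2$, which is $O(r)$ since $G^{(N,\infty,0)}$ was already cached in the initialization. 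The dominant term is $O(nr)$, so each iteration costs $O(nr)$.

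Adding the initialization and the $g_{\max}$ iterations gives $O(nr)+g_{\max}\cdot O(nr)=O(g_{\max}\cdot nr)$, which is the claimed bound. There is no genuine obstacle in this argument; the only thing to be careful about is that Theorem~\ref{theo:m} yields a deterministic closed form so we never need to sample noise to estimate the MSE inside the loop, which is what keeps the per-iteration cost at $O(nr)$ rather than blowing up by an extra Monte-Carlo factor.
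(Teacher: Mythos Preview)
Your proposal is correct and follows essentially the same approach as the paper: bound the initialization by $O(nr)$, observe that each loop iteration is dominated by the $O(nr)$ cost of building the capped aggregated map $G^{(N,m,0)}$ (with the remaining $O(r)$ Frobenius-norm computation negligible), and multiply by the $g_{\max}$ iterations. If anything, your write-up is a bit more explicit than the paper's, which folds the initialization into the loop bound without comment.
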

\begin{proof}
Rewriting the expected MSE in Theorem~\ref{theo:m}, we have,
$$\E\left[\text{MSE}\right] = m^2\sigma^{*2}+\frac{1}{r}\lnorm G^{(N,m,0)} - G^{(N,\infty,0)} \rnorm_2^2,$$ where the $\ell_2$ norm still represents Frobenius norm. Since $ G^{(N,m,0)}$ and $ G^{(N,\infty,0)}$ are $r$-dimensional vectors, the complexity of computing expected MSE for a given $m$ and $ G^{(N,m,0)}$ will be $O(m)$.

Then, we evaluate the complexity of calculating the capped noise-free aggregated gaze map $ G^{(N,m,0)}$. Since we are adding cap to each observer's individual gaze map, we can add cap to every pixel of all observers. Thus, one can see there are $nr$ pixels from $n$ observers in total. Considering the for loop in Algorithm~\ref{alg:findm} runs $g_{\max}$ times, Theorem~\ref{theo:complex} follows.
\end{proof}

\section{Implications}

\subsubsection{Datasets are growing.}
In contrast to the previous research paradigm where datasets were collected, archived, and then released, there is a growing trend to crowd-source data collection, via mobile apps for example, so that new data is continually being added to the dataset. With the methods presented, the new data is safe as long as the publicly available dataset is put through the Gaussian noise mechanism.
Another way that eye tracking datasets might seek to preserve a user's privacy is by releasing their eye movements, but not \textit{what} they were looking at. With the methods we present, releasing the stimulus image/video that observers look at is safe because even in the worst case an adversary will not be able to guess at what a particular individual looked at.

\subsubsection{Why can the generic theorem of differential privacy not be applied to eye tracking?}
Unlike classical databases, every observer in eye tracking database contributes much richer information  (i.e., millions of pixels) than individuals in classical databases. However, the generic theorems in differential privacy do not focus on high-dimensional data. Simply applying union bounds will result in very loose privacy bounds and unacceptable noise levels.

\subsubsection{Why are we adding noise when the field is spending so much time and effort removing it?} 
There has been much research in eye tracking to improve the accuracy of eye tracking to maximize the utility and applicability of eye tracking devices for diverse use cases. This work has been directed at sources of noise that are inherent to the process, such as
sensor and measurement noise. However, as eye tracking becomes ubiquitous, there is a cost for the individual user whose data is being recorded and for the organizations who are safeguarding and distributing this data. This cost is the sacrifice of privacy of the individual. We do not argue for reversing the technological push towards reliable, accurate eye tracking. Rather, we argue that our objective as a community must expand to include privacy in addition to utility. For those situations where privacy is deemed to be worth protecting, we introduce flexible mechanisms to do so. Noise is added to data in the aggregate, not to any individual's data. Further, the noise function is fully understood, and its parameters are set based on the desired privacy-utility tradeoff. Unlike measurement noise, whose source may not be fully understood, we add noise in a controlled and measured way to achieve a specific objective. 

\subsubsection{Why should the research community care?}
This research requires an interdisciplinary approach. The eye tracking community cannot just ``leave it to the privacy researchers'' because the theoretical guarantees that form the basis of this framework are highly dependent on the particular mechanisms that the data goes through (the functional forms in the equations, the particular thresholds, etc.). These mechanisms have to be developed collaboratively to preserve the utility of the output for eye tracking applications. 

\subsubsection{Why should the industry care?}
The push towards ubiquitous eye tracking is being driven by large investments by major industry players. While their applications are highly data-dependent, their customers are increasingly data-sensitive. This paper proposes the first of a class of solutions which pair theoretical analysis from a DP-perspective with a practically implementable workflow for developers. This work opens the door for a responsible industry that can inform their users that while they may eye track the users at very high accuracy and resolution to enable foveated rendering (for example), they would put this data through mechanism A or B before releasing it to the app developers. 

\vspace{-5mm}
\textcolor{blue}{
\section{Conclusions and Future directions}}
\textcolor{blue}{We have proposed to apply the notion of differential privacy toward the analysis of privacy of eye-tracking data.} We have analyzed the privacy guarantees provided by the mechanisms of random selection, and additive noise (Gaussian and Laplacian noise). \textcolor{blue}{The main takeaway from this paper is that adding Gaussian noise will guarantee differential privacy; the noise level should be appropriately selected based on the application.} Our focus is on static heatmaps as a sandbox to understand how the definitions of differential privacy apply to eye tracking data. In this sense, this paper is a proof of concept. Eye tracking data is fundamentally temporal in nature, and the privacy loss if an adversary could access saccade velocities and dynamic attention allocation would be much greater than static heatmaps. Future work would systematically consider all the different ways in which eye tracking data is analyzed and stored.

We have considered two noise models (Gaussian and Laplacian noise). Follow up work might consider the privacy-utility trade-off for different noise models like pink noise. For temporal data such as raw eye movements, it may even be relevant to understand which noise models are more realistic. In other words, if the user's virtual avatar was driven by privacy enhanced eye tracking data, it should still appear realistic and natural. 

The mechanisms and analyses presented here apply to real-valued data that can be aligned to a grid and capped to a maximum value without loss of utility. Though our focus has been on eye tracking heatmaps, there are other data that fall in this category, for example, gestures on a touchscreen, or readings from a force plate. It would also be interesting to generalize these mechanisms and analyses to other physiological data such as heart rate, galvanic skin response, and even gestures or gait. These data are conceptually similar to eye tracking data in that they carry signatures of the individual's identity and markers of their health and well-being.
{Furthermore, in physiological domains many data and analyses are temporal in nature.
It would be
interesting and important
to define and analyze differential privacy for temporal data.
}
 
\mbox{}\\[1ex]
\begin{acks}
This material is based upon work supported by the National Science Foundation under Grant No. 1566481. Any opinions, findings, and conclusions or recommendations expressed in this material are those of the author(s) and do not necessarily reflect the views of the National Science Foundation.
\end{acks}

\mbox{}\\[1ex]
\balance

\clearpage
\appendix
\setcounter{section}{0}
\section{Missing proofs for theorems}\label{apd:thrors2}
\subsection{Proof and discussion for Theorem~\ref{theo:rs2}}
Theorem~\ref{theo:rs2} says $\calm_{\text{rs2}}$ has poor privacy. 

\begin{proof}
Considering the same case as the proof of Theorem~\ref{theo:rs1}, we have,
{\small
\begin{equation}\nonumber
    \begin{split}
&\Pr\left[\calm_{\text{rs2}}( G_1,\cdots,  G_n) \geq  \frac{1}{cn}\,\bigg|\, G_i = \textbf{0},  \Gmi = {0}\right] = 0\;\;\;\text{and}\\
&\Pr\left[\calm_{\text{rs2}}( G_1,\cdots,  G_n) \geq  \frac{1}{cn}\,\bigg|\, G_i = \textbf{1},  \Gmi = {0}\right] \\
=\;& 1- \left(1-\frac{1}{n}\right)^{cn} \approx 1-e^{-c} = \Theta(c) 
    \end{split}
\end{equation}}

Thus, we know $\delta$ can't be less than $c$ to make Inequality~\ref{equ:dp2} hold and Theorem~\ref{theo:rs2} follows.
\end{proof}

\subsection{Proof for Theorem~\ref{theo:lpls1}}\label{apd:throlpls1}
\begin{proof}
The probability density function $p_L$ of output $ G^{(L)} = (G^{(L)}(1).\cdots,G^{(L)}(r))$ is
{\small
\begin{equation}\nonumber
    \begin{split}
    &p_L\left(\calm_{\text{Laplacian}(\sigma_L)}( G_1,\cdots,  G_n) =  G^{(L)}\right)\\
    =\;& \frac{1}{\left(\sqrt 2 \sigma_L\right)^r}\cdot \exp\left(-\frac{\sqrt 2 }{\sigma_L} \lnorm G -  G^{(L)}\rnorm_1\right).
    \end{split}
\end{equation}}
To simplify notations, we use $p_L\left( G^{(L)}\right)$ to represent\\ $p_L\left(\calm_{\text{Laplacian}(\sigma_L)}( G_1,\cdots,  G_n) =  G^{(L)}\right)$ when without ambiguity. 
Let $ G_i^{*}$ and $ G_i^{**}$ to denote any two possible gaze maps of the $i-$th observer. If the $i-$th observer's gaze map is $ G_i^*$, the probability density function of the outputting $p_{L}( G^{(L)}\mid  G_i =  G_i^*)$ is
{\footnotesize$$p_{L}( G^{(L)}\mid  G_i =  G_i^*) = \frac{1}{\left(\sqrt 2 \sigma_L\right)^r}\exp\left(-\frac{\sqrt 2}{\sigma_L} \lnorm\frac{ G_i^*}{n} + \frac{n-1}{n} G_{-i} -  G^{(L)}\rnorm_1\right),$$}
where we abused notation to let $ G_{-i} = \frac{1}{n-1}\sum_{j\neq i} G_{j}$, which is the aggregated gaze map except $ G_i$.
Similarly, if the $i-$th observer's gaze map is $ G_i^{**}$, we have,
{\footnotesize $$p_{L}( G^{(L)}\mid  G_i =  G_i^{**}) = \frac{1}{\left(\sqrt 2 \sigma_L\right)^r}\exp\left(-\frac{\sqrt 2}{\sigma_L} \lnorm\frac{ G_i^{**}}{n} + \frac{n-1}{n} G_{-i} -  G^{(L)}\rnorm_1\right).$$}
For any $ G_i^{*}$,  $ G_i^{**}$ and  $ G_{-i}$, we have,
{\footnotesize{
\begin{equation}\nonumber
\begin{split}
&\frac{p_{L}( G^{(L)}\mid  G_i =  G_i^{**})}{p_{L}( G^{(L)}\mid  G_i =  G_i^{*})}\\
=\;& \exp\left(\frac{\sqrt 2}{\sigma_L}\cdot\left(\lnorm\frac{ G_i^{*}}{n} + \frac{n-1}{n} G_{-i} -  G^{(L)}\rnorm_1 - \lnorm\frac{ G_i^{**}}{n} + \frac{n-1}{n} G_{-i} -  G^{(L)}\rnorm_1\right)\right)\\
\leq\;& \exp\left(\frac{\sqrt 2 \cdot || G_i^{**}- G_i^*||_1}{\sigma_L n}\right) \leq \exp\left(\frac{\sqrt {2}\cdot m r}{\sigma_L n}\right).\\
\end{split}
\end{equation}}}
Since the probability is the integral of PDF, the above upper bound for PDF ratio is also an upper bound for probability ratio. Thus, for any possible output set $S$, we have,
\begin{equation}\nonumber
\begin{split}
&\Pr\left[\calm_{\text{Laplacian}(\sigma_L)}( G_i^*, \Gmi)\in S \mid  \Gmi\right]\\
\leq\;& \exp\left(\frac{\sqrt {2}\cdot m r}{\sigma_L n}\right)\Pr\left[\calm_{\text{Laplacian}(\sigma_L)}( G_i^{**}, G_{-i})\in S \mid  \Gmi\right].
\end{split}
\end{equation}
and Theorem~\ref{theo:lpls1} follows by applying Definition~\ref{def:dp2}.
\end{proof}



\end{document}